\newtheorem{theorem}{Theorem}
\newtheorem{lemma}{Lemma}
\newtheorem{assumption}{Assumption}
\begin{document}
%



%

\title{ Towards  Sustainable  Satellite Edge Computing}
\author{\IEEEauthorblockN{Qing Li, Shangguang Wang, Xiao Ma, Ao Zhou, Fangchun Yang}
\IEEEauthorblockA{State Key Laboratory of Networking and Switching Technology\\
Beijing University of Posts and Telecommunications\\
Beijing, China\\
\{q\_li;sgwang;maxiao18;aozhou;fcyang\}@bupt.edu.cn}}

\maketitle

\begin{abstract}
Recently, Low Earth Orbit (LEO)  satellites experience rapid development and satellite edge computing emerges to address the limitation of bent-pipe architecture in existing satellite systems. Introducing energy-consuming computing components in satellite edge computing increases the depth of battery discharge. This will shorten batteries' life and influences the satellites' operation in orbit. In this paper, we aim to extend batteries' life by minimizing the  depth of discharge for Earth observation missions.   Facing the challenges of wireless uncertainty and energy harvesting dynamics, our work develops 
an online energy scheduling algorithm within  an online convex optimization framework. Our algorithm achieves sub-linear regret and the constraint violation asymptotically  approaches zero. Simulation results show that our algorithm can reduce the depth of discharge significantly. 

\end{abstract}
\begin{IEEEkeywords}
satellite edge computing; depth of  discharge; online convex optimization.

\end{IEEEkeywords}
\IEEEpeerreviewmaketitle
\section{Introduction}
Recently,  Low Earth Orbit (LEO)  satellites  experience  rapid development  due to the reduced  cost of both manufacturing and launching.  
Existing  satellite systems  operate under a bent-pipe architecture \cite{larson1992space}, where ground stations send control commands to orbits and satellites reply with raw data.  This architecture relies heavily on the satellite-ground communication, which has limitations  of high downlink latency, intermittent availability, and link unreliability  \cite{10.1145/3373376.3378473}.   Satellite edge computing equips satellites with computing resources and supports in-orbit data processing. Thus, it can reduce the downlink transmission load and  provide scalability benefits  when LEO satellite constellations scale up \cite{10.1145/3373376.3378473}.

Introducing energy-consuming computing components in satellite edge computing brings a huge burden to energy systems, of which the capacities are inherently constrained due to  strict volume and weight constraints \cite{mysatellite}. Most LEO  satellites have solar cells installed on their surface to harvest solar energy  and store energy (usually in batteries) to keep  functioning during the eclipse. Both  energy harvesting and storage are constrained  by physical sizes.  For example, a common category of LEO satellites, CubeSats, are among [1, 15] kg in weight and of up to 12U in volume (1U = 10 cm × 10 cm × 10 cm) \cite{Davoli2019SmallSA}.  The harvested power of CubSats  ranges in  [1, 7] watts because of the limited area of the solar arrays \cite{Davoli2019SmallSA}.  Besides,  the capacities of batteries  in CubeSats range in dozens of watt-hour\cite{mysatellite}. 
Introducing the computing components increases the depth of discharge (i.e., the amount of discharge energy) in each eclipse period.   This  will greatly shorten  batteries' life, which further influences the satellites' operation in orbit \cite{app8112078}.  Therefore, it is timely and important to extend  batteries’ life.

  Extensive previous works focus on  energy scheduling in ground energy harvesting systems  involving many issues under different network contexts \cite{9205292, fraternali2020aces, huang2021adaptive, 9217302}. Applying these works directly to satellite systems is appealing, but they cannot address the  challenges brought by  high-speed satellite movement. Moreover, none of them attempts to extend  batteries' life. One related work \cite{8977503} focuses on extending batteries' life  by transmission power control in LEO satellite networks. In our scenario, introducing edge computing to LEO satellite networks makes the problem more complex as it requires energy  coordination between different energy-consuming components.

  It is non-trivial to extend the batteries’ life of LEO   satellites due to the following challenges. The first challenge is how to optimize the depth of discharge in the unstable wireless environment. Each satellite orbits the Earth every $ \sim$100 minutes, traveling at $ \sim$27,000 kmph \cite{bhattacherjee2019network}.  This high-speed movement of satellites creates high churn in satellite-ground links. Further, the bitrates of satellite-ground links are unpredictable due to the uncertainty  of wireless environment such as weather conditions. The second challenge is how to adapt to energy harvesting dynamics. Typical low Earth orbits expose satellites to the Sun for about 66\% of each $\sim$100 minutes orbit period \cite{Davoli2019SmallSA}.  The periodical satellite movement incurs significant energy harvesting dynamics when satellites show up in light and eclipse alternatively.   
  
In this paper,  we seek to extend the batteries’ life in satellite edge computing by reducing the depth of discharge. We consider Earth observation missions, which consist of three energy-consuming processes: sensing, computing, and communication.  We exploit the pattern information brought by periodical satellite movement and propose a novel  optimal pattern-aware benchmark that generalizes state-of-the-art.
Given the uncertainty of the wireless environment and the dynamics of energy harvesting,  we propose a pattern-aware online energy scheduling algorithm within the online convex optimization framework.   Our algorithm achieves  sub-linear regret (compared with the  optimal pattern-aware benchmark) and  the constraint violation asymptotically  approaches zero. Simulation results show that our algorithm adapts to  the energy harvesting dynamics and  reduces  the depth of  discharge significantly. 

\section{Related Work}
\textbf{Energy scheduling in ground energy harvesting systems}.
  Energy harvesting systems  have been  investigated extensively on the ground.  Most existing workS learn energy scheduling strategies online by reinforcement learning method to address the uncertainty in the energy harvesting process. 
    Ortiz \textit{et al.} learn a distributed energy allocation for both a transmitter and a relay with only partially observable system states \cite{9205292}.  
    Fraternali  \textit{et al.} aim to maximize the sensing quality of energy harvesting sensors for periodic and event-driven indoor sensing with available energy \cite{fraternali2020aces}. 
    Huang \textit{et al.}  propose an  adaptive processor frequency adjustment algorithm  to plan  the energy usage of energy harvesting edge servers \cite{huang2021adaptive}.
  Hatami \textit{et al.} control  sensors status update  to minimize the energy cost considering the freshness requirement of the sensing information \cite{9217302}.  
 These works cannot be applied  directly to the satellite edge computing systems for two reasons. First, they cannot address the challenges brought by  the unique satellite movement, i.e., intermittent link availability.  Second, they do not investigate the problem of extending the batteries' life.

\textbf{Satellite edge computing}.
 Satellite edge computing  is still in its infancy. Most  existing works  focus  on  the space-air-ground network.  Boero \textit{et al.} \cite{8473415}, Giambene \textit{et al.} \cite{8473417}, and  Shi  \textit{et al.} \cite{8610425}   design the space-air-ground network architecture based on SDN and NFV  technologies. 
Tang \textit{et al.} \cite{9155485} manage  resources for SDN-based satellite-terrestrial networks in an on-demand way.  Chen \textit{et al.} propose the time-varying resource graph  to model resources in space-terrestrial integrated networks  \cite{myinfocom1}.  They  also  investigate how to dynamically place and
assign controllers in LEO satellite networks to adapt to satellite mobility and traffic load fluctuation \cite{myinfocom2}.
 Besides, a few works focus on satellite edge computing  frameworks \cite{10.1145/3422604.3425937}.   Denby \textit{et al.}  \cite{10.1145/3373376.3378473} propose to support computing in nano-satellite constellations to address existing ben-pipe architecture limitations. 
  Tsuchida \textit{et al.} \cite{8977503} also focus on extending battery life in transmission power control  problems for LEO satellite networks. Introducing edge computing to LEO satellite networks makes energy scheduling more complex as it requires energy  coordination between different energy-consuming components.


\section{System Model}
 As  shown in Figure~\ref{tu1}, we consider  an electrical power subsystem  model  in satellites \cite{8289325}. It encompasses efficient and reliable energy generation, storage, and  distribution to various on-board subsystems.
 The  electrical power subsystem consists of  a solar module, a battery module, a payload module, and an energy scheduler. 
 The solar module is responsible for  energy harvesting and transfer.  
 Solar energy is the dominant energy source of  LEO satellites, and about 85\% of nano-satellites are equipped with solar panels\cite{mysatellite}. 
 The energy storage module provides energy
supply when the solar energy is unavailable during on-orbit operations.
 The payload module regulates the energy to the other subsystems.
 On-board subsystems consume electrical  energy to maintain satellite orbit motion or perform various missions. 
 The energy scheduler decides how to allocate electrical energy to these subsystems.  
 In this paper, we consider an energy scheduling problem for an Earth observation mission. 
To optimize the energy allocation among sensing,  computing, and communication process, we abstract the energy harvesting,  storage, and distribution as an energy queue model as shown in Figure \ref{tu2}.  We divide time into slots indexed by $t$  with duration $T_\mathrm{d}$. 
\subsection{Energy Harvesting and Storage Model} \label{sec:Energy Harvesting and Storage Model}
\begin{figure}[t]
\begin {center}
\centerline{\includegraphics[width=0.45\textwidth]{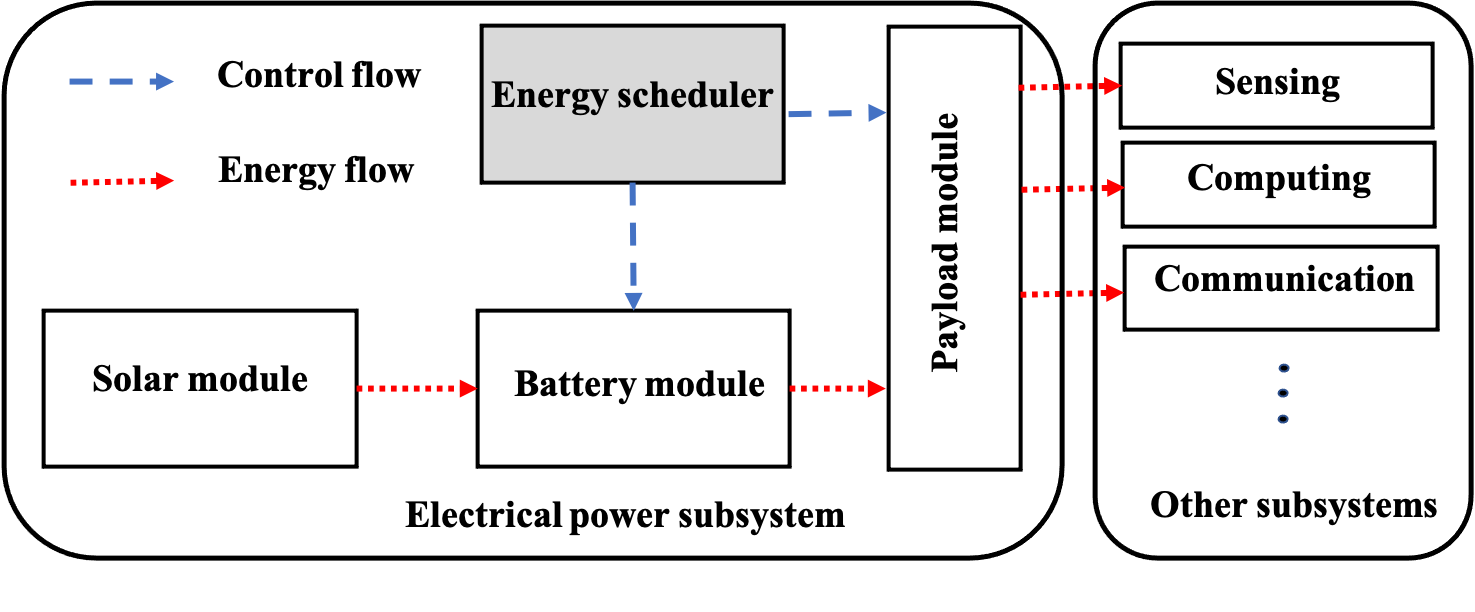}}
\caption{ System model.}
\label{tu1}
\end {center}
\vspace{-6ex}
\end{figure}
 Given the solar panel hardware, the  energy harvesting rate $r_e(t)$  is dominated by  two main factors: the light available to a satellite solar array  and the  projected surface area of the panels exposed to the Sun. 
 The first factor varies with the inverse square of the distance from the Sun.
 The second factor varies with  the angle between the solar panel and the Sun.
 We model the harvested energy in each time slot $t$ as 
\begin{align} \label{Equation:1}
E_{\mathrm{total}}(t)=r_e(t)T_{\mathrm{d}}.
\end{align}

The harvested energy varies intensely when satellites are in different light conditions.
In the light, the harvested energy  provides power to the energy-consuming subsystems directly  and charges the batteries.  
In the eclipse, batteries provide energy  to subsystems.
  The  dynamics of energy buffered in the batteries can be modeled as an energy queue 
\begin{align}\label{Equation:2}
E(t)=\min\{\max(E(t-1)+E_{\mathrm{in}}(t)-E_{\mathrm{out}}(t),0), E_{\max}\},
\end{align}
where $E_{\mathrm{in }}(t)$ is the amount of charge energy,  $E_{\mathrm{out}}(t)$ is the amount of discharge energy,   and $E_{\max}$ is the maximal battery capacity.
\subsection{Energy Consumption Model }  
The energy consumption  can be divided into two categories: the mission level energy consumption (including sensing energy $E_\mathrm{sen}(t)$, computing energy $E_\mathrm{cmp}(t)$, and communication energy $E_\mathrm{com}(t)$) and the energy to perform the fundamental operations of the satellites denoted by $E_a(t)$.
 The total energy consumption satisfies
\begin{align}\label{Equation:4}
 E_\mathrm{sen}(t)+E_\mathrm{cmp}(t)+ E_\mathrm{com}(t)+E_a(t) \leq\\
\nonumber
E_{\mathrm{out}}(t)+ E_\mathrm{total}(t)-E_\mathrm{in}(t).
\end{align}
\subsubsection{Sensing Energy Model}
The satellite collects the images of the Earth while moving along its orbit. The camera's  frame rate $f_\mathrm{sen}(t)$  (in frame/s)  is adjustable to meet different missions demands. 
The sensing energy is 
\begin{align} \label{Equation:5}
E_\mathrm{sen}(t)=P_\mathrm{sen}T_{\mathrm{d}}, 
\end{align}
where $P_\mathrm{sen}$ is the power of the camera.

\begin{figure}[t]
\begin {center}
\centerline{\includegraphics[width=0.45\textwidth]{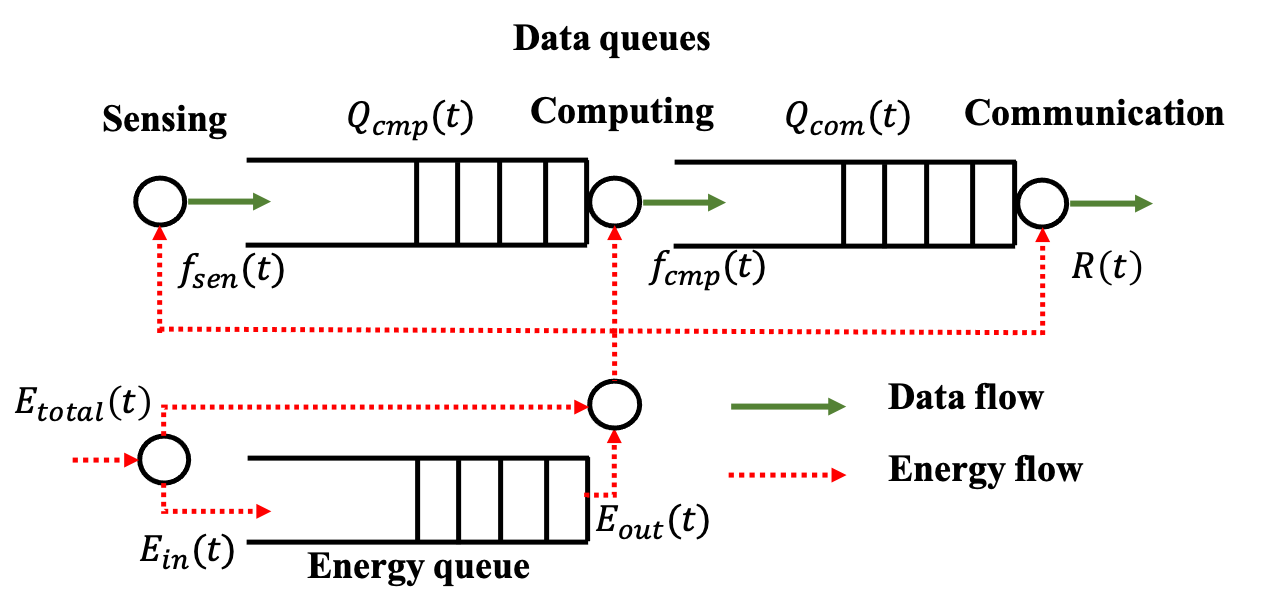}}
\caption{ Queue model.}
\label{tu2}
\end {center}
\vspace{-6ex}
\end{figure}
\subsubsection{Computing Energy Model}
The sensed data (e.g., Earth imagery) can be hundreds of  Gigabytes  and  its quality is fundamentally limited by the on-board cameras and orbit altitude \cite{10.1145/3373376.3378473}.  It is unnecessary to transmit all the raw data to the ground.  Besides, the satellite-ground downlink is not always available and the link bitrate is affected by many factors, e.g., orbit parameters, ground station capability, and location. 
These necessitate in-orbit data computing, which can significantly relieve the satellite-ground link pressure. A typical on-board computing example is to identify images of interest and separate them from raw data by CNN-based image classification, objective detection, or  any other computation. The computing system adopts the dynamic voltage and frequency scaling technique \cite{le2010dynamic} to adjust its CPU frequency denoted by $f_\mathrm{cmp}(t)$ (in cycle/s). Let $0\leq f_\mathrm{cmp}(t) \leq f_\mathrm{cmp}^{\max}(t)$, where $f_\mathrm{cmp}^{\max}(t)$ is the maximal CPU frequency. 
The computation energy consumption is 
\begin{align} \label{Equation:6}
E_\mathrm{cmp}(t)=a f_\mathrm{cmp}^3(t)T_{\mathrm{d}},
\end{align}
where $a$ is the effective capacitance coefficient of computing hardware.
\subsubsection{Communication Energy Model}
The  satellite-ground  connection is intermittent because the satellite moves fast with respect to  ground stations and the downlink session can only last for less than ten minutes in one single pass \cite{10.1145/3373376.3378473}. 
The satellite  stores the processed data before connecting to a ground station.
In each time slot $t$, the satellite-ground connection $I_c(t)$ is known as a prior, where $I_c(t)=1$ means that  the satellite-ground connection is available, and vice versa. We model the transmission rate  $R(t)$ (in bit/s)  as
\begin{align}\label{Equation:7}
R(t)=B\log_2(1+\frac{P_\mathrm{com}(t)h(t)}{N_0}),
\end{align}
where $P_\mathrm{com}(t)$ is the transmit power of the satellite, $N_0$ is the received noise power, and $h(t)$ is the channel gain.  Let $0\leq R(t) \leq R^{\max}I_c(t)$, where $R^{\max}$ is the maximal transmission rate. The communication energy consumption is 
\begin{align}\label{Equation:8}
E_\mathrm{com}(t)=P_\mathrm{com}(t)T_{\mathrm{d}}.
\end{align}

\subsection{Data Queue Model}
 To describe the data buffers dynamics, we construct two data queues: a waiting-for-computing data queue $Q_\mathrm{cmp}(t)$ between sensing  and computing processes and a waiting-for-transmitting  data queue $Q_\mathrm{com}(t)$ between computing and communication processes.  
 The arrival rate of the queue  is the  sensed data amount  and the departure rate is the processed data amount in the current slot.
   We can update the waiting-for-computing data queue $Q_\mathrm{cmp}(t)$ as
\begin{align}\label{Equation:9}
Q_\mathrm{cmp}(t)=\max\{Q_\mathrm{cmp}(t-1)+A_s(t)-\kappa f_\mathrm{cmp}(t)T_{\mathrm{d}},0\},
\end{align}
where each CPU cycle executes $\kappa$ bits of data known as a prior,  $A_s(t)=D f_\mathrm{sen}(t)T_{\mathrm{d}}$  is the amount of sensed data, and   $D$ is the data size of each image frame.

After the mission completion, all sensed information is required to be downloaded to the ground for further use. 
  Hence, we should have $Q_\mathrm{cmp}(T+1)=0$ with the initial queue length $Q_\mathrm{cmp}(1)\geq0$ by the long-term constraint
\begin{align}\label{Equation:11}
\sum_{t=1}^{T}(Df_\mathrm{sen}(t)-\kappa f_\mathrm{cmp}(t))\leq 0.
\end{align}
The  waiting-for-transmitting data queue $Q_\mathrm{com}(t)$ buffers  results of interest that need to be transmitted to the ground.
 Similarly, we can update the waiting-for-transmitting data queue as
\begin{align}\label{Equation:12}
Q_\mathrm{com}(t)=\max\{Q_\mathrm{com}(t-1)+\rho A_p(t)-R(t)T_{\mathrm{d}},0\}, 
\end{align}
where $\rho \in [0,1]$ is the effective data proportion, the processed data amount 
$A_p(t)=\min\{Q_\mathrm{cmp}(t-1)+A_s(t), \kappa f_\mathrm{cmp}(t)T_{\mathrm{d}}\}$.
Similar to (\ref{Equation:11}), the long-term constraint on $Q_\mathrm{com}(t)$ is 
\begin{align}\label{Equation:14}
\sum_{t=1}^{T}(\rho\kappa f_\mathrm{cmp}(t)-R(t)) \leq0.
\end{align}

\subsection{Problem Formulation }
In this paper, we aim to extend the battery life by minimizing the depth of discharge, i.e., the amount of discharge energy over the time horizon $T$.  We formulate the problem  as 
\begin{align}\label{Equation:15}
&\min_{\{ E_{\mathrm{out}}, f_\mathrm{cmp}, R\}} \sum_{t=1}^{T}E_{\mathrm{out}}(t) \\
\nonumber
&s.t. ~~(\ref{Equation:4}),  (\ref{Equation:11}),  (\ref{Equation:14}).
\end{align}
It requires a holistic optimization of   discharge energy amount, CPU frequency, and transmission rate  to adapt to system dynamics such as energy arrival rate and satellite mobility, subject to real-time and long-term constraints.

\section{Algorithm Design}

If the full information of energy arrival rate and wireless channel state information over the whole time horizon $T$ is known, the problem is a convex optimization, which can be solved offline.  In our scenario, the  energy arrival rate can be calculated precisely. 
However, the wireless channel state information of the satellite-ground link is  hard to predict in a long run by complex prediction methods. If we can predict the wireless channel state information in the current slot,  the problem is a stochastic network optimization problem.
We can adopt  Lyapunov optimization theory to transform the long-term problem into real-time ones, which are convex optimization problems  \cite{6813406}.
However, the channel state information in the current slot is unknown before making decisions. Hence, the scheduler should adapt its decisions based on the results of previous slots.  A typical setting for such online optimization and learning is online convex optimization \cite{boyd2004convex}. 
\subsection{Optimal Pattern-Aware  Benchmark}\label{Sec:OCS}
Although  the channel state information is unavailable, we can exploit the pattern information when the satellite moves around the Earth. For example, the periodical satellite movement generates periodical-like pattern  or trend information, i.e., the satellite-ground connection is periodical as  satellites move around the Earth and satellites show up in  the light and  eclipse alternately.  We design a pattern-aware benchmark to adapt to the different characteristics of the environment and energy harvesting across time slots.
Given the time horizon $T$, we first create a partition which splits the pattern space $\Omega$ into $K=(L)^N$ hypercubes of identical size
$\frac{1}{L}\times\frac{1}{L}\times\cdots\times\frac{1}{L}$. These hypercubes correspond to the different environment and energy harvesting characteristics. Then 
 we define  the time window $\mathcal{W}_k$  contains all the time slots  whose pattern  $\mathbf{c}(t)$ belongs to the $k_\mathrm{th}$ hypercube, i.e.,
\begin{align}\label{Equation:16}
&\mathcal{W}_k=\{t: \mathbf{c}(t) \in \Omega_k\}, \forall k=\{1, \cdots, K\}.
\end{align}
The partitioning of the time horizon captures a general pattern information. 

Then, we introduce an optimal pattern-aware benchmark to measure the performance of pattern-aware algorithms.  We denote the decision variable with $\mathbf{x}=(E_{\mathrm{out}}(t), f_\mathrm{cmp}(t), R(t))$ and the objective function with $f^t(\mathbf{x}(t))$ for simplicity.  Given a sequence of channel state information $\{h(1), \cdots, h(T)\} $ over the time horizon $T$, the optimal pattern-aware benchmark finds $K$  energy scheduling strategies $[\mathbf{x}^*(1), \cdots, \mathbf{x}^*(K)]$ for each time window  $\mathcal{W}_k$ as 
\begin{align}\label{Equation:17}
\mathbf{x}^*(k)= \arg \min_{\mathbf{x}}\sum_{t\in \mathcal{W}_k} f^t(\mathbf{x}),
\end{align}
which is optimal regarding  information only in the respective time window $\mathcal{W}_k$. The optimal pattern-aware  benchmark is unavailable because it requires the full knowledge of the $h(t)$  in each time window $\mathcal{W}_k$. However, it captures  energy harvesting and satellite ground connection patterns.

A performance metric to evaluate the learning performance of online  algorithms is \textit{regret}: the difference between the online algorithm and a benchmark. 
The accumulative regret of an algorithm  with respect to the optimal pattern-aware  benchmark over the whole $T$ time horizon  is  defined as
\begin{align}
Reg_A(T,K)= \sum_{t=1}^T f^t(\mathbf{x}(t))- \sum_{k=1}^K\sum_{t\in \mathcal{W}_k} f^t( \mathbf{x}^*(k)).
\end{align}
Our regret definition is different from the common metrics of static regret and dynamic regret \cite{zinkevich2003online} because all three compare with different benchmarks. 
The static regret compares with an optimal static benchmark which finds the best  fixed strategy  $\mathbf{x}^*$ in hindsight  over the $T$ time horizon. 
The dynamic regret compares with  an optimal dynamic benchmark which  finds the best strategy $\mathbf{x}^*(t)$ for each time slot $t$.  The pattern-aware  regret  is general as it reduces to the static regret when $K=1$ and  to the dynamic regret when $K=T$ and each time slot has unique pattern information. 
Note that the optimal dynamic benchmark has the best performance compared with both the optimal static benchmark and the optimal pattern-aware  benchmark when the underlying system optima is inherently changing \cite{8338087}. We can infer that  the optimal pattern-aware  benchmarks with larger $K$ perform better in our scenario with the dynamics of energy harvesting  and  the uncertainty of the wireless environment.
We aim to is find a sequence $\mathbf{x}(t)$ such that the regret $Reg_A(T,K)$ grows sub-linearly  with respect to $T$.
\subsection{Online Energy Scheduling with No Regret}

In Lyapunov optimization, we introduce  virtual queues to control the long-term constraint violations. Hence, we introduce a low-complexity virtual queue based algorithm that addresses  our  online convex optimization problem with long-term constraints  \cite{yu2016low}. 
To describe the algorithm more clear, we first introduce some notations. 
We denote the gradient of the objective function as $\nabla f^t(\mathbf{x}(t))$. 
The  convex set $\mathcal{X}_0$ is defined by instantaneous constraints, i.e.,
\begin{align}
&\mathcal{X}_0=\{\mathbf{x}: 0\leq f_\mathrm{cmp}(t) \leq f_\mathrm{cmp}^{\max}(t), \\
   \nonumber
  & 0\leq R(t) \leq R^{\max}I_c(t),   E_\mathrm{out} \geq 0\}, \forall t.
\end{align}
 We denote the long-term constraint functions as $\mathbf{g}(\mathbf{x}(t))=(g_1(x(t)), g_2(x(t)))$ where $g_1(x(t))=Df_\mathrm{sen}(t)-\kappa f_\mathrm{cmp}(t)$, $g_2(x(t))=\rho\kappa f_\mathrm{cmp}(t)-R(t)$. Let $ \tilde{\mathbf{g}}(\mathbf{x}(t))=\gamma \mathbf{g}(\mathbf{x}(t))$, where $\gamma$ is a positive constant. 
We introduce a virtual queue vector $\mathbf{Q}^V(t)=(Q_1^V(t), Q_2^V(t))$  for the long-term constraint vector $\mathbf{g}(\mathbf{x}(t))$.
 The algorithm first chooses arbitrary feasible $\mathbf{x}(1) \in \mathcal{X}_0$ and  then chooses $\mathbf{x}(t+1)$ that solves the following problem 
 \begin{align} \label{eq23}
&\min_{\mathbf{x} \in \mathcal{X}_0}\{[\nabla f^t(\mathbf{x}(t))]^\mathbf{T}[\mathbf{x}-\mathbf{x}(t)]\\
\nonumber
&+[\mathbf{Q}^V(t)+\tilde{\mathbf{g}}(\mathbf{x}(t))]^\mathbf{T} \tilde{\mathbf{g}}(\mathbf{x})+\alpha ||\mathbf{x}-\mathbf{x}(t)||^2\},
\end{align}
where $\alpha$ is a positive constant and the virtual queues 
 \begin{align}\label{Equation:21}
Q_i^V(t)=\max\{-\tilde{g}_i(\mathbf{x}(t)), Q_i^V(t-1)+\tilde{g}_i(\mathbf{x}(t))\}.
\end{align}
Note that the virtual queues are different from the data queues and the energy queue in the system model. The value  of $\mathbf{Q}^V(t)$ is a queue backlog  of constraint violations. By introducing the virtual queue vector $\mathbf{Q}^V(t)$, 
the algorithm transforms the long-term constraints into queue length variation and solve the problem in (\ref{eq23}) in each time slot $t$.

 \textbf{Linear constraints.} We observe that our $\mathbf{g}(\mathbf{x}(t))$ is affine i.e,  $\mathbf{g}(\mathbf{x}(t))= \mathbf{A}\mathbf{x}-\mathbf{b}$, where $\mathbf{A}= [0, -\kappa, 0 ;  0, \rho \kappa,  -1]$ and $\mathbf{b}=[-Df_s(t), 0]^\mathbf{T}$, then the update of $\mathbf{x}(t+1)$ can be solved by a projection onto a convex set as in Lemma \ref{lem1}.
 \begin{lemma} \cite{yu2016low} \label{lem1}
 If  $\mathbf{g}(\mathbf{x}(t))$ is affine,  i.e,  $\mathbf{g}(\mathbf{x}(t))= \mathbf{A}\mathbf{x}-\mathbf{b}$ for some matrix $\mathbf{A}$ and vector $\mathbf{b}$, then we can update $\mathbf{x}(t+1)$ as
 \begin{align}\label{Equation:22}
\mathbf{x}(t+1)=\mathcal{P}_{x\in \mathcal{X}_0}\left[\mathbf{x}-\left(\mathbf{x}(t)-\frac{1}{2\alpha}\mathbf{d}(t)\right)\right],
\end{align}
 where
  \begin{align}
\mathbf{d}(t)=\nabla f^t(\mathbf{x}(t))+\sum_{i=1}^{2}[Q_i^V(t)+\tilde{g}_i(\mathbf{x}(t))]\nabla\tilde{g}_i(\mathbf{x}(t)).
\end{align}
  \end{lemma}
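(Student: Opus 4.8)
The plan is to observe that, once $\tilde{\mathbf{g}}$ is affine, the objective in (\ref{eq23}) is a strictly convex quadratic in $\mathbf{x}$, so that minimizing it over $\mathcal{X}_0$ reduces exactly to a Euclidean projection onto $\mathcal{X}_0$. The mechanism is completion of squares: I would show the objective equals $\alpha\|\mathbf{x}-\mathbf{p}\|^2$ plus a constant, where $\mathbf{p}=\mathbf{x}(t)-\tfrac{1}{2\alpha}\mathbf{d}(t)$, and then invoke the definition of the projection operator.

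First I would expand the objective $\Phi(\mathbf{x})$ term by term and isolate its dependence on $\mathbf{x}$. The linearization term $[\nabla f^t(\mathbf{x}(t))]^{\mathbf{T}}[\mathbf{x}-\mathbf{x}(t)]$ is linear in $\mathbf{x}$. The penalty term $[\mathbf{Q}^V(t)+\tilde{\mathbf{g}}(\mathbf{x}(t))]^{\mathbf{T}}\tilde{\mathbf{g}}(\mathbf{x})$ is where the affine hypothesis is essential: because $\tilde{\mathbf{g}}(\mathbf{x})=\gamma(\mathbf{A}\mathbf{x}-\mathbf{b})$ depends linearly on $\mathbf{x}$ while its multiplier $\mathbf{Q}^V(t)+\tilde{\mathbf{g}}(\mathbf{x}(t))$ is a constant evaluated at the current iterate, this entire term is linear in $\mathbf{x}$ with no quadratic cross term. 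Hence the only quadratic contribution comes from $\alpha\|\mathbf{x}-\mathbf{x}(t)\|^2$, so $\Phi$ has the constant Hessian $2\alpha\mathbf{I}\succ 0$ and is strictly convex.

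Next I would collect the coefficient of the linear part. Since each $\tilde g_i$ is affine, $\nabla\tilde g_i(\mathbf{x}(t))$ is the constant $i$th row of $\gamma\mathbf{A}$, whence $(\gamma\mathbf{A})^{\mathbf{T}}[\mathbf{Q}^V(t)+\tilde{\mathbf{g}}(\mathbf{x}(t))]=\sum_{i=1}^{2}[Q_i^V(t)+\tilde g_i(\mathbf{x}(t))]\nabla\tilde g_i(\mathbf{x}(t))$. Adding the term $\nabla f^t(\mathbf{x}(t))$ reproduces exactly the vector $\mathbf{d}(t)$ of the statement, so that, up to an additive constant independent of $\mathbf{x}$,
\[
\Phi(\mathbf{x}) = \alpha\|\mathbf{x}-\mathbf{x}(t)\|^2 + \mathbf{d}(t)^{\mathbf{T}}(\mathbf{x}-\mathbf{x}(t)) + \text{const}.
\]

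Finally I would complete the square, using $\alpha\|\mathbf{x}-\mathbf{x}(t)\|^2+\mathbf{d}(t)^{\mathbf{T}}(\mathbf{x}-\mathbf{x}(t))=\alpha\|\mathbf{x}-(\mathbf{x}(t)-\tfrac{1}{2\alpha}\mathbf{d}(t))\|^2-\tfrac{1}{4\alpha}\|\mathbf{d}(t)\|^2$. As the last term is constant, minimizing $\Phi$ over $\mathbf{x}\in\mathcal{X}_0$ is equivalent to minimizing $\|\mathbf{x}-(\mathbf{x}(t)-\tfrac{1}{2\alpha}\mathbf{d}(t))\|^2$, whose unique minimizer over the convex set $\mathcal{X}_0$ is by definition the Euclidean projection of $\mathbf{x}(t)-\tfrac{1}{2\alpha}\mathbf{d}(t)$ onto $\mathcal{X}_0$, giving (\ref{Equation:22}); uniqueness follows from strict convexity of $\Phi$ together with convexity of $\mathcal{X}_0$. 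The main obstacle is purely the bookkeeping in the middle steps: one must verify that affineness removes any quadratic-in-$\mathbf{x}$ contribution from the penalty term and that the gradient identity assembles $\mathbf{d}(t)$ correctly, since everything downstream is the standard quadratic-to-projection reduction.
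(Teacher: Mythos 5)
Your proof is correct and takes essentially the same route as the paper's: both arguments hinge on the fact that affineness of $\tilde{\mathbf{g}}$ makes the penalty term $[\mathbf{Q}^V(t)+\tilde{\mathbf{g}}(\mathbf{x}(t))]^{\mathbf{T}}\tilde{\mathbf{g}}(\mathbf{x})$ exactly linear in $\mathbf{x}$, identify the combined linear coefficient as $\mathbf{d}(t)$, and then relate the objective of (\ref{eq23}) to $\alpha\|\mathbf{x}-(\mathbf{x}(t)-\tfrac{1}{2\alpha}\mathbf{d}(t))\|^2$ up to an additive constant. The only difference is direction of presentation: you complete the square going from the minimization problem to the projection, while the paper expands the squared distance and chains equivalences from the projection back to the minimization problem; the mathematical content is identical.
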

 \begin{algorithm} [t]
\caption{Pattern-Aware Online Energy Scheduling}
\label{algorithm: al3}
\begin{algorithmic}[1]
\REQUIRE ~~ \\
Constant parameter  $\alpha$, $\gamma, \beta$,  time slot duration $T_d$, 
energy related parameters: $E_\mathrm{total}(t)$, $E_a(t)$,  $E_{\max}$, 
sensing related parameters $D$,  $P_\mathrm{sen}(t), f_\mathrm{sen}(t)$, 
computing related parameters $\kappa, \rho, f_\mathrm{cmp}^{\max}$, 
communication related parameter $B, N_0, I_c(t), R^{\max}$.
\ENSURE ~~ \\
Energy scheduling strategies $\mathbf{x}(t), t=\{1, \cdots, T\}$.
\STATE  Initialization: $Q_\mathrm{com}(0), Q_\mathrm{cmp}(0), E(0), Q_i^V(0)$ as 0. \\
\FOR {Each time slot $t= 1, \cdots, T$}
\STATE Identify time window $\mathcal{W}_k \ni t$
\IF {$t=t_k^1$ for $\mathcal{W}_k$}
\STATE Choose arbitrary $\mathbf{x}(t_k^1) \in \mathcal{X}_0$.
\ELSE
\STATE Choose $\mathbf{x}(t_k^\tau)$ by solving the projection 
 \begin{align}\label{Equation:24}
\mathcal{P}_{x\in \mathcal{X}_0}\left[\mathbf{x}-\left(\mathbf{x}(t_k^{\tau-1})-\frac{1}{2\alpha}\mathbf{d}(t_k^{\tau-1})\right)\right].
\end{align}
\ENDIF
\STATE Observe actual $h(t)$.
\STATE Update the virtual queue vector $\mathbf{Q}^V$ via
 \begin{align}\label{Equation:25}
Q_i^V(t)=\max\{-\tilde{g}_i(\mathbf{x}(t)), Q_i^V(t_k^{\tau-1})+\tilde{g}_i(\mathbf{x}(t))\}.
\end{align}
\STATE Update the energy queue $E(t)$, data queues $Q_\mathrm{com}(t)$ and $Q_\mathrm{cmp}(t)$ according to (\ref{Equation:2}), (\ref{Equation:9}), (\ref{Equation:12}).
\ENDFOR
\end{algorithmic}
\end{algorithm}

 We design our pattern-aware online energy scheduling algorithm as in Algorithm  \ref{algorithm: al3}. 
Our algorithm exploits the periodical energy harvesting and connection information then it learns to make decisions based on the historical information in each time window $\mathcal{W}_k$.  We use  $t_k^1, t_k^\tau$ to denote the first  time slot and  the $\tau_\mathrm{th}$ time slot in each time window $\mathcal{W}_k$, respectively. At $t= t_k^1$ of $\mathcal{W}_k$, our algorithm chooses any feasible energy scheduling strategy because it has no previous information to rely on. At $t= t_k^\tau$ of $\mathcal{W}_k$, it updates the energy scheduling strategy by solving the projection in (\ref{Equation:24}). After receiving the channel state information of the current slot, it updates virtual queues, the energy queue,  and  data queues  via (\ref{Equation:25}), (\ref{Equation:2}), (\ref{Equation:9}), and  (\ref{Equation:12}), respectively. Our algorithm is simple as it either selects a strategy arbitrarily or  performs an easy gradient descent,  incurring neglected energy consumption.
 
 \textbf{Performance Analysis}.
 We analyze the objective function and constraints then make a mild assumption for the theoretical proof. The  feasible region $\mathcal{X}_0$ is bounded in our problem, then there exists a constant $G_1$ such that  $||\mathbf{x}-\mathbf{y}||\leq G_1$  and a constant  $G_2$ such that $||\mathbf{g}(\mathbf{x})||\leq G_2$, for all  $\mathbf{x}, \mathbf{y}\in  \mathcal{X}_0$. The objective function has bounded gradient on the feasible region $\mathcal{X}_0$, i.e., $||\nabla f^t(\mathbf{x}(t))||=1, \forall \mathbf{x}, t$. There exists a constant $\beta$ such that $||\mathbf{g}(\mathbf{x})-\mathbf{g}(\mathbf{y})||\leq \beta||\mathbf{x}-\mathbf{y}||$ for all  $\mathbf{x}, \mathbf{y}\in  \mathcal{X}_0$. 
 \begin{assumption}
Assume that there exist $\epsilon$ and $\hat{\mathbf{x}}\in  \mathcal{X}_0$ such that $\mathbf{g}_i(\hat{\mathbf{x}})\leq \epsilon$ for all $i=1, 2$.
  \end{assumption}

The slater condition is mild for convex optimization. We characterize the regret and constraint violations for Algorithm \ref{algorithm: al3} through the  analysis of  a drift plus penalty expression following the idea of \cite{yu2016low} as in Lemma 2.

\begin{lemma}
Consider online convex optimization with long-term constraints that satisfy Assumption 1. Let $\mathbf{x}^* \in \mathcal{X}_0$ be any fixed solution that satisfies $\mathbf{g}(\mathbf{x}^*)$, e.g., $\mathbf{x}^*=\arg \min_{\mathbf{x} \in \mathcal{X}_0} \sum_{t=1}^T f^t(\mathbf{x})$.
 Let $\gamma >0, \eta>0$ be arbitrary.
 
 1. If $\alpha\geq \frac{1}{2}(\gamma^2\beta^2+\eta)$ in Algorithm \ref{algorithm: al3}, then for all $T\geq1$, Algorithm 1 in \cite{yu2016low} has 
 \begin{align}
 \sum_{t=1}^T f( \mathbf{x}(t))\leq \sum_{t=1}^T f( \mathbf{x}^*)
 +\alpha ||(( \mathbf{x}^*- \mathbf{x}(1))||^2+\frac{T}{2\eta}.
 \end{align}
  2. For all $T\geq 1$, the constraint violations of Algorithm 1 in \cite{yu2016low} is bounded as 
  \begin{align}
 \sum_{t=1}^T{g}_i(\mathbf{x}(t))\leq 2G_2+\frac{\alpha G_1^2+G_1}{\gamma^2\epsilon}+\frac{2 G_2^2}{\epsilon}, \forall i=1, 2.
  \end{align}
\end{lemma}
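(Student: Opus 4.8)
The plan is to adopt the drift-plus-penalty methodology of Lyapunov optimization, treating the virtual-queue vector $\mathbf{Q}^V(t)$ as a state whose magnitude encodes the accumulated constraint violation. I would set the Lyapunov function $L(t)=\tfrac12\|\mathbf{Q}^V(t)\|^2$ and its one-slot drift $\Delta(t)=L(t)-L(t-1)$. The first step is a drift bound: a short case analysis on the componentwise $\max$ in \eqref{Equation:21} gives $\Delta(t)\leq[\mathbf{Q}^V(t-1)]^{\mathbf{T}}\tilde{\mathbf{g}}(\mathbf{x}(t))+\|\tilde{\mathbf{g}}(\mathbf{x}(t))\|^2$, and invoking $\|\mathbf{g}(\mathbf{x})\|\leq G_2$ (hence $\|\tilde{\mathbf{g}}(\mathbf{x})\|\leq\gamma G_2$) controls the quadratic term. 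This isolates the linear steering term $[\mathbf{Q}^V(t-1)]^{\mathbf{T}}\tilde{\mathbf{g}}(\mathbf{x}(t))$ that the per-slot update \eqref{eq23} is designed to push down.

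The central object is the objective minimized in \eqref{eq23}, which is $2\alpha$-strongly convex because of the penalty $\alpha\|\mathbf{x}-\mathbf{x}(t)\|^2$ and, since $\mathbf{g}$ is affine, is exactly the criterion whose constrained minimizer is the projection step of Algorithm~\ref{algorithm: al3} (Lemma~\ref{lem1}). Strong convexity yields the variational inequality that, for any competitor $\mathbf{u}\in\mathcal{X}_0$, the minimizer $\mathbf{x}(t+1)$ satisfies $\Psi_t(\mathbf{x}(t+1))+\alpha\|\mathbf{u}-\mathbf{x}(t+1)\|^2\leq\Psi_t(\mathbf{u})$, where $\Psi_t$ denotes the objective in \eqref{eq23}. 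Evaluating this at $\mathbf{u}=\mathbf{x}^*$, adding the convexity inequality $f^t(\mathbf{x}(t))-f^t(\mathbf{x}^*)\leq[\nabla f^t(\mathbf{x}(t))]^{\mathbf{T}}(\mathbf{x}(t)-\mathbf{x}^*)$, and inserting the drift bound, I would assemble
\begin{align*}
f^t(\mathbf{x}(t))-f^t(\mathbf{x}^*)+\Delta(t)
&\leq \alpha\big(\|\mathbf{x}(t)-\mathbf{x}^*\|^2-\|\mathbf{x}(t+1)-\mathbf{x}^*\|^2\big)\\
&\quad+\tfrac{1}{2\eta}+\Big(\tfrac{\gamma^2\beta^2}{2}+\tfrac{\eta}{2}-\alpha\Big)\|\mathbf{x}(t+1)-\mathbf{x}(t)\|^2,
\end{align*}
where the $\beta$-Lipschitz bound $\|\tilde{\mathbf{g}}(\mathbf{x}(t+1))-\tilde{\mathbf{g}}(\mathbf{x}(t))\|\leq\gamma\beta\|\mathbf{x}(t+1)-\mathbf{x}(t)\|$ together with Young's inequality (splitting the gradient term with weight $\eta$ and using $\|\nabla f^t\|=1$) generate the $\gamma^2\beta^2$ and $\tfrac{1}{2\eta}$ contributions. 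The hypothesis $\alpha\geq\tfrac12(\gamma^2\beta^2+\eta)$ is precisely what makes the coefficient of $\|\mathbf{x}(t+1)-\mathbf{x}(t)\|^2$ nonpositive, so that term is discarded. Summing over $t=1,\dots,T$ telescopes the drift (with $L(0)=0$, $L(T)\geq0$, so $\sum_t\Delta(t)\geq0$) and the quadratic distances ($\sum_t(\|\mathbf{x}(t)-\mathbf{x}^*\|^2-\|\mathbf{x}(t+1)-\mathbf{x}^*\|^2)\leq\|\mathbf{x}(1)-\mathbf{x}^*\|^2$), yielding Part~1.

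For Part~2, I would first convert the queue into a bound on the raw constraint sum: \eqref{Equation:21} implies $Q_i^V(t)\geq Q_i^V(t-1)+\tilde{g}_i(\mathbf{x}(t))$, so summing telescopes to $\sum_{t=1}^T\tilde{g}_i(\mathbf{x}(t))\leq Q_i^V(T)-Q_i^V(0)=Q_i^V(T)$, whence $\sum_{t=1}^T g_i(\mathbf{x}(t))\leq Q_i^V(T)/\gamma$. It remains to bound the terminal queue, and this is where Assumption~1 enters. Re-running the drift-plus-penalty inequality with the competitor taken to be the Slater point $\hat{\mathbf{x}}$, the steering term becomes $[\mathbf{Q}^V(t-1)]^{\mathbf{T}}\tilde{\mathbf{g}}(\hat{\mathbf{x}})$, which the strict-feasibility margin $\epsilon$ renders negative and of order $-\gamma\epsilon\|\mathbf{Q}^V(t-1)\|$, so the drift turns negative once $\|\mathbf{Q}^V(t-1)\|$ crosses a threshold. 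A standard drift-bounding lemma then certifies a uniform bound $\|\mathbf{Q}^V(t)\|\leq C$, where $C$ collects $\alpha G_1^2$, $G_1$, $\gamma$, $\epsilon$, and $\gamma^2 G_2^2$ in exactly the combination that, after dividing by $\gamma$ and adding the $2G_2$ slack from the $\max$, produces the claimed bound $\sum_t g_i(\mathbf{x}(t))\leq 2G_2+\frac{\alpha G_1^2+G_1}{\gamma^2\epsilon}+\frac{2G_2^2}{\epsilon}$.

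The main obstacle I anticipate is assembling the single clean per-slot drift-plus-penalty inequality: one must match the quadratic gap $\alpha\|\mathbf{x}^*-\mathbf{x}(t+1)\|^2$ from strong convexity against the drift's cross term, bridge the $\mathbf{x}(t)$-versus-$\mathbf{x}(t+1)$ mismatch in the gradient and constraint evaluations via the $\beta$-Lipschitz property, and route every leftover through the $\|\mathbf{x}(t+1)-\mathbf{x}(t)\|^2$ budget so that its coefficient is exactly $\tfrac12(\gamma^2\beta^2+\eta)-\alpha$. Getting this bookkeeping right so the coefficient is nonpositive under the hypothesis is the delicate part; afterwards both parts follow by telescoping and, for Part~2, the Slater-based queue bound. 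Since for affine $\mathbf{g}$ the update coincides with the projection of Lemma~\ref{lem1} and the setting matches \cite{yu2016low} verbatim, I would invoke their drift-plus-penalty lemma for the per-slot step and concentrate the remaining effort on the two telescoping sums and the terminal-queue estimate.
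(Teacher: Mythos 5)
The paper never actually proves this lemma---it is imported verbatim as a statement about Algorithm 1 of \cite{yu2016low}, with the entire analysis deferred to that reference. Your sketch faithfully reconstructs precisely that reference's drift-plus-penalty argument (drift bound on $\tfrac{1}{2}\|\mathbf{Q}^V(t)\|^2$, strong-convexity pushback at the per-slot minimizer, Young's inequality plus the $\beta$-Lipschitz bridge so that the hypothesis $\alpha\geq\tfrac{1}{2}(\gamma^2\beta^2+\eta)$ cancels the $\|\mathbf{x}(t+1)-\mathbf{x}(t)\|^2$ term, telescoping for Part~1, and the Slater-point negative-drift bound on the terminal queue for Part~2), so it is correct in its essentials and takes the same route as the paper's intended justification.
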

We extend theoretical results  of  \cite{yu2016low}   to fit the context of our pattern-aware scenario as follows.
 \begin{theorem} 
If $\gamma = T^{1/4}, \eta= \sqrt{T}$ and $\alpha= \frac{1}{2}(\beta^2+1)\sqrt{T}$, for all $T\geq 1$, our algorithm has sublinear regret as 
 \begin{align}\label{Equation:25}
 \sum_{k=1}^K\sum_{t \in \mathcal{W}_k} f( \mathbf{x}(t))\leq \sum_{k=1}^K\sum_{t \in \mathcal{W}_k} f( \mathbf{x}^*(k)) +O(\sqrt{T}).
  \end{align} 
  For all $T\geq 1$, the constraint violations are bounded as
  \begin{align} \label{Equation:26}
    \sum_{k=1}^K \sum_{t \in \mathcal{W}_k} g_i(\mathbf{x}(t))\leq 2KG_2+\\
    \nonumber
    \frac{K(\frac{1}{2}(\beta^2+1) G_1^2+G_1+2 G_2^2)}{\epsilon}+\frac{KG_1}{\epsilon\sqrt{T}}.
  \end{align}
 \end{theorem}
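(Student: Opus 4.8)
The plan is to reduce the theorem to a window-by-window application of Lemma 2, exploiting the fact that both the pattern-aware benchmark $\mathbf{x}^*(k)$ and the algorithm's updates decompose across the disjoint windows $\mathcal{W}_1,\dots,\mathcal{W}_K$. First I would observe that, restricted to the subsequence of slots belonging to a single window $\mathcal{W}_k$, Algorithm \ref{algorithm: al3} is exactly an instance of Algorithm 1 of \cite{yu2016low}: at the first slot $t_k^1$ it picks an arbitrary feasible $\mathbf{x}(t_k^1)\in\mathcal{X}_0$ with a freshly reset virtual queue, and at every subsequent slot $t_k^\tau$ it performs the same projected update (\ref{Equation:24}) and virtual-queue update (\ref{Equation:25}) as the base scheme. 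Hence Lemma 2 applies verbatim to each window, with horizon $T_k=|\mathcal{W}_k|$ and with the fixed comparator taken to be $\mathbf{x}^*(k)$, which is feasible for the long-term constraint by its definition in (\ref{Equation:17}).

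For the regret I would apply part~1 of Lemma 2 inside each window to obtain
\[
\sum_{t\in\mathcal{W}_k} f^t(\mathbf{x}(t))\leq \sum_{t\in\mathcal{W}_k} f^t(\mathbf{x}^*(k))+\alpha\,\|\mathbf{x}^*(k)-\mathbf{x}(t_k^1)\|^2+\frac{T_k}{2\eta},
\]
and then sum over $k=1,\dots,K$. Using the diameter bound $\|\mathbf{x}^*(k)-\mathbf{x}(t_k^1)\|^2\leq G_1^2$ and the partition identity $\sum_{k=1}^K T_k=T$, the accumulated slack is at most $K\alpha G_1^2+T/(2\eta)$. Substituting $\eta=\sqrt{T}$ and $\alpha=\tfrac12(\beta^2+1)\sqrt{T}$ makes the first term $\tfrac12 K(\beta^2+1)G_1^2\sqrt{T}$ and the second $\sqrt{T}/2$; since $K$, $\beta$, and $G_1$ are constants in $T$, both are $O(\sqrt{T})$, which yields (\ref{Equation:25}). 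I would also record that $\gamma=T^{1/4}$ gives $\gamma^2\beta^2=\beta^2\sqrt{T}$, so $\alpha=\tfrac12(\gamma^2\beta^2+\eta)$ holds with equality, verifying the hypothesis $\alpha\geq\tfrac12(\gamma^2\beta^2+\eta)$ needed to invoke Lemma 2.

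For the constraint violation I would apply part~2 of Lemma 2 inside each window, giving $\sum_{t\in\mathcal{W}_k} g_i(\mathbf{x}(t))\leq 2G_2+(\alpha G_1^2+G_1)/(\gamma^2\epsilon)+2G_2^2/\epsilon$, a bound independent of $T_k$. Summing over the $K$ windows and substituting $\gamma^2=\sqrt{T}$ and $\alpha=\tfrac12(\beta^2+1)\sqrt{T}$ collapses $\alpha G_1^2/\gamma^2$ to the $O(1)$ term $\tfrac12(\beta^2+1)G_1^2$ while $G_1/(\gamma^2\epsilon)=G_1/(\epsilon\sqrt{T})$ vanishes, reproducing (\ref{Equation:26}); dividing through by $T$ then shows the time-averaged violation tends to zero, matching the asymptotic claim.

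The step I expect to be the real obstacle is the first one: rigorously arguing that the guarantees of Lemma 2 survive the window-by-window decomposition. The sets $\mathcal{W}_k$ are generally non-contiguous in real time, so I must confirm that the drift-plus-penalty telescoping underlying Lemma 2 only ever couples consecutive slots \emph{within the same window} --- which it does, since the virtual queue is reset at $t_k^1$ and the update (\ref{Equation:25}) references only $Q_i^V(t_k^{\tau-1})$ --- and that tuning the global parameters $(\alpha,\gamma,\eta)$ to $T$ rather than to each individual $T_k$ does not break the per-window bound (it does not, because the condition $\alpha\geq\tfrac12(\gamma^2\beta^2+\eta)$ is a single inequality that holds uniformly across windows). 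Once this decomposition is justified, the remainder is the routine summation and parameter substitution sketched above, together with the standing assumption that $K$ is a fixed constant independent of $T$.
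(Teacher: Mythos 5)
Your proposal follows essentially the same route as the paper's own proof: apply Lemma 2 separately to each time window $\mathcal{W}_k$ with comparator $\mathbf{x}^*(k)$, sum the per-window regret and violation bounds over the $K$ windows, and substitute $\gamma = T^{1/4}$, $\eta = \sqrt{T}$, $\alpha = \tfrac{1}{2}(\beta^2+1)\sqrt{T}$. In fact you are somewhat more careful than the paper, which does not explicitly verify that the parameter choice satisfies $\alpha \geq \tfrac{1}{2}(\gamma^2\beta^2+\eta)$, does not invoke the diameter bound $G_1$ to absorb the terms $\alpha\|\mathbf{x}^*(k)-\mathbf{x}(t_k^1)\|^2$ into $O(\sqrt{T})$, and does not justify why the per-window decomposition of the virtual-queue dynamics is legitimate.
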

 \begin{proof}
We adopt  Lemma 2  for each time window $\mathcal{W}_k$, separately. 
 If $\alpha\geq \frac{1}{2}(\gamma^2\beta^2+\eta)$ in Algorithm \ref{algorithm: al3}, then for all $|\mathcal{W}_k|\geq1$, we have
 \begin{align}
 \sum_{t \in \mathcal{W}_k} f( \mathbf{x}(t))\leq \sum_{t \in \mathcal{W}_k} f( \mathbf{x}^*)
 +\alpha ||(( \mathbf{x}^*(k)- \mathbf{x}(t_k^1))||^2+\frac{|\mathcal{W}_k|}{2\eta}.
 \end{align}
 For the whole time horizon $T$, we have
 \begin{align}
 \sum_{k=1}^K\sum_{t \in \mathcal{W}_k} f( \mathbf{x}(t))\leq \sum_{k=1}^K\sum_{t \in \mathcal{W}_k} f( \mathbf{x}^*)\\
 \nonumber
 +\sum_{k=1}^K\alpha ||(( \mathbf{x}^*(k)- \mathbf{x}(t_k^1))||^2+\sum_{k=1}^K\frac{|\mathcal{W}_k|}{2\eta}.
 \end{align} 
If $\gamma = T^{1/4}, \eta= \sqrt{T}$ and $\alpha= \frac{1}{2}(\beta^2+1)\sqrt{T}$, we have (\ref{Equation:25}).
  
  The  constraint violation bound  is irrelevant to the time horizon. Thus, for each time window $\mathcal{W}_k$, we have 
    \begin{align}
 \sum_{t \in \mathcal{W}_k} {g}_i(\mathbf{x}(t))\leq 2G_2+\frac{\alpha G_1^2+G_1}{\gamma^2\epsilon}+\frac{2 G_2^2}{\epsilon}, \forall i=1, 2.
  \end{align}
   For the whole time horizon $T$, we have
 \begin{align}
   \sum_{k=1}^K\sum_{t \in \mathcal{W}_k} {g}_i(\mathbf{x}(t))\leq  K(2G_2+\frac{\alpha G_1^2+G_1}{\gamma^2\epsilon}+\frac{2 G_2^2}{\epsilon}), \forall i=1, 2.
  \end{align}
If $\gamma = T^{1/4}, \eta= \sqrt{T}$ and $\alpha= \frac{1}{2}(\beta^2+1)\sqrt{T}$, we have (\ref{Equation:26}).
 \end{proof}
  Theorem 1  implies that if we choose $\gamma = T^{1/4}$ and $\alpha= \frac{1}{2}(\beta^2+1)\sqrt{T}$ in Algorithm \ref{algorithm: al3} then we can achieve $O(\sqrt{T})$ regret and $O(\frac{1}{\sqrt{T}})$ constraint violation. Note that the theoretical results are based on the parameters $G_1, G_2, \epsilon$ while it  only requires $\beta$ to implement Algorithm \ref{algorithm: al3}, which is known since the constraint functions $\mathbf{g}(\mathbf{x}(t))$ do not change.
  
\section{Simulation Results}
\subsection{Simulation Setting}

\begin{figure*}[htbp]
\centering
\begin{minipage}[t]{0.245\linewidth}
\centering
{\includegraphics[width=1\linewidth]{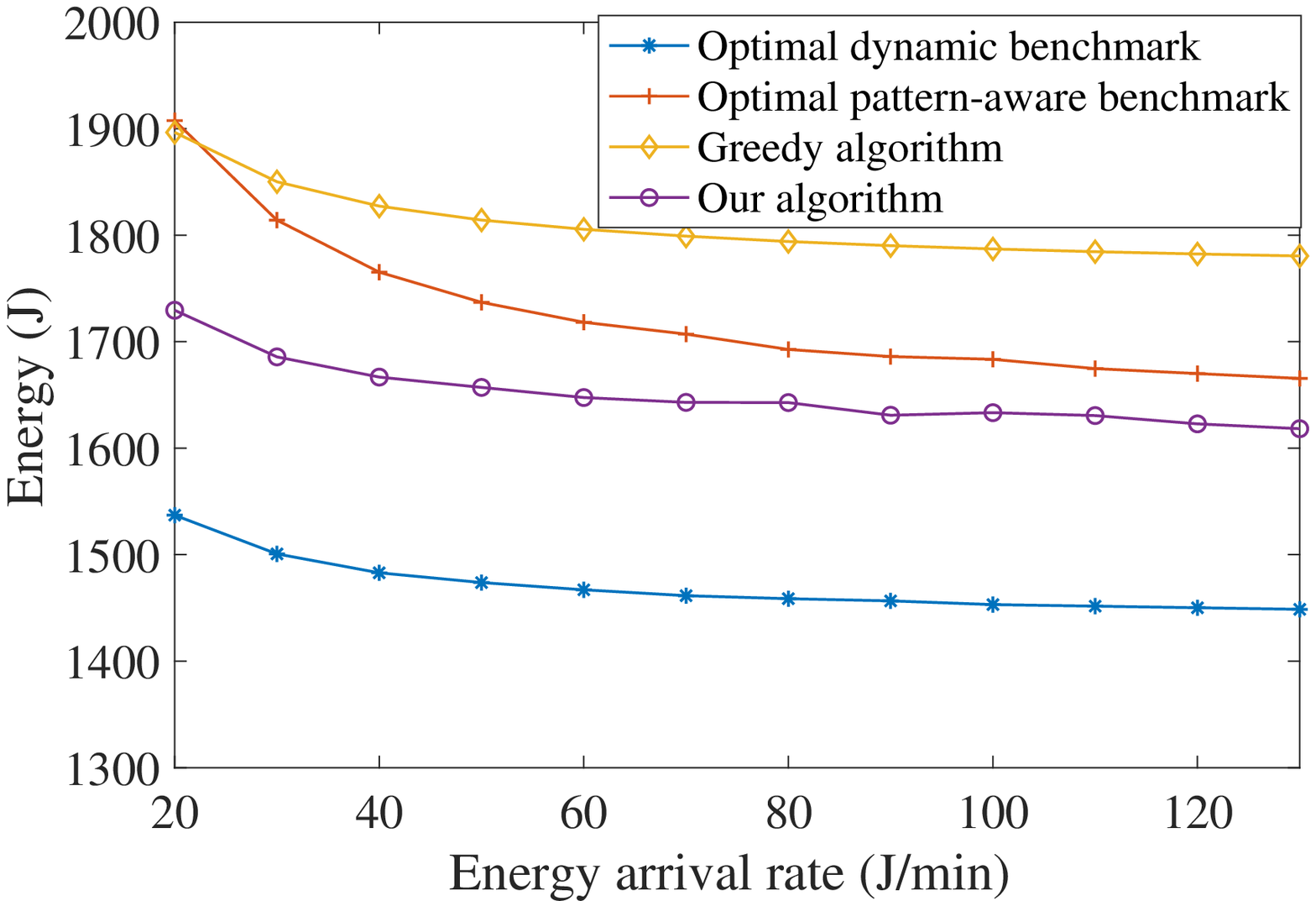}}
\caption{Impact of harvested energy density.}
\label{Fig: Density}
\end{minipage}
\hfill
\begin{minipage}[t]{0.245\linewidth}
\centering
{\includegraphics[width=1\linewidth]{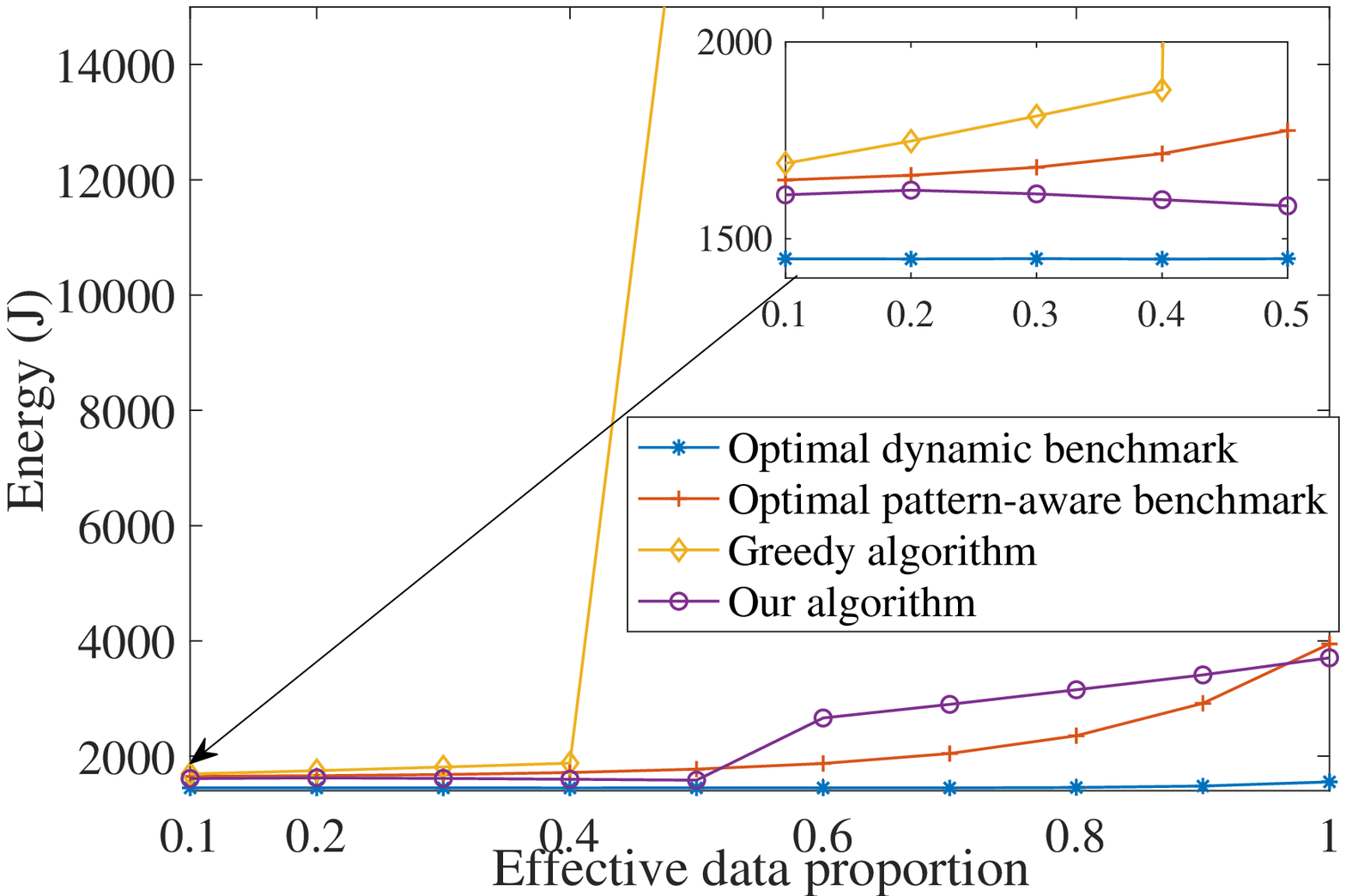}}
\caption{Impact of  effective data proportion.}
\label{Fig: Rho}
\end{minipage}
\hfill
\begin{minipage}[t]{0.245\linewidth}
\centering
{\includegraphics[width=1\linewidth]{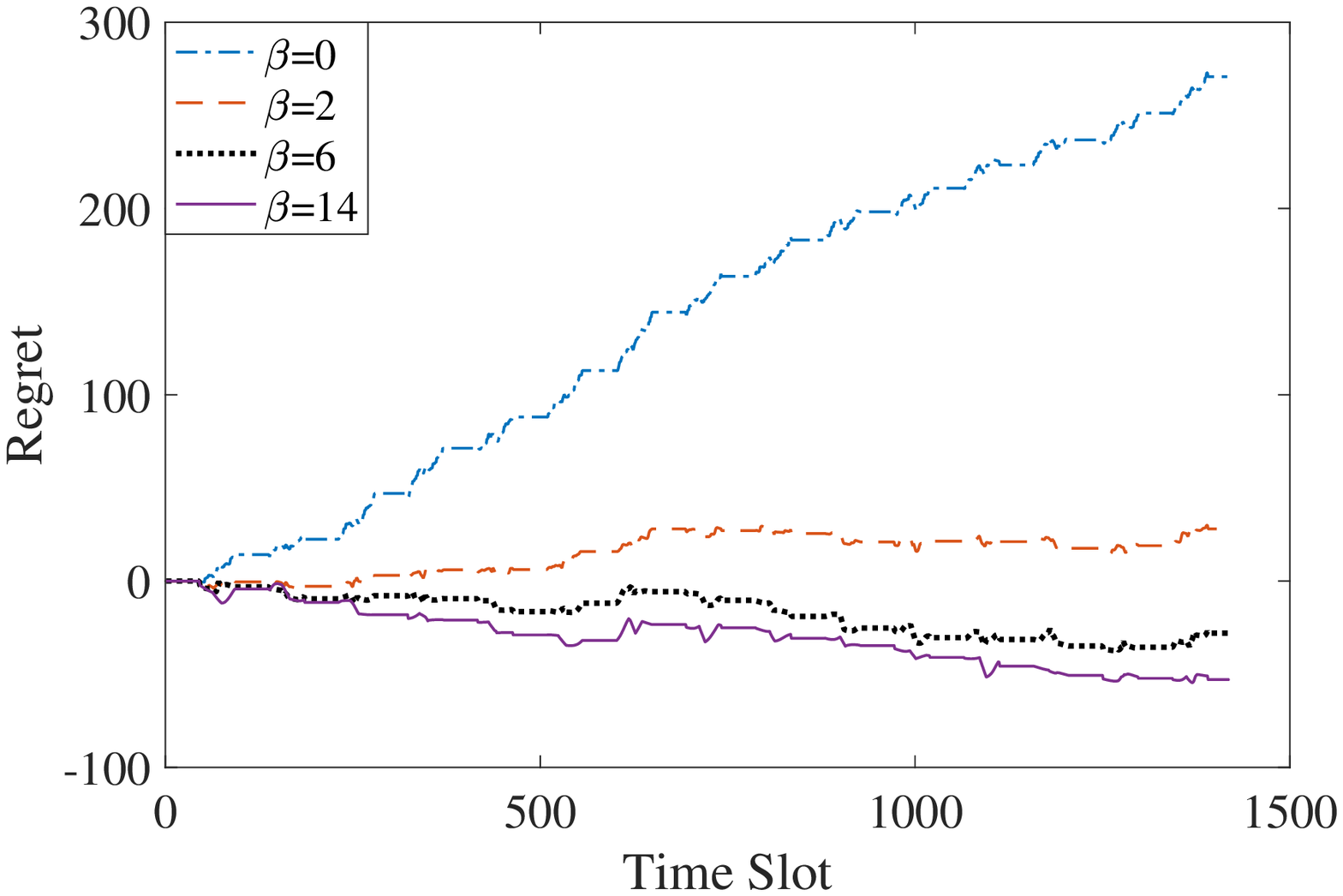}}
\caption{Algorithm regret.}
 \label{Fig:Regret}
\end{minipage}
\begin{minipage}[t]{0.245\linewidth}
\centering
{\includegraphics[width=1\linewidth]{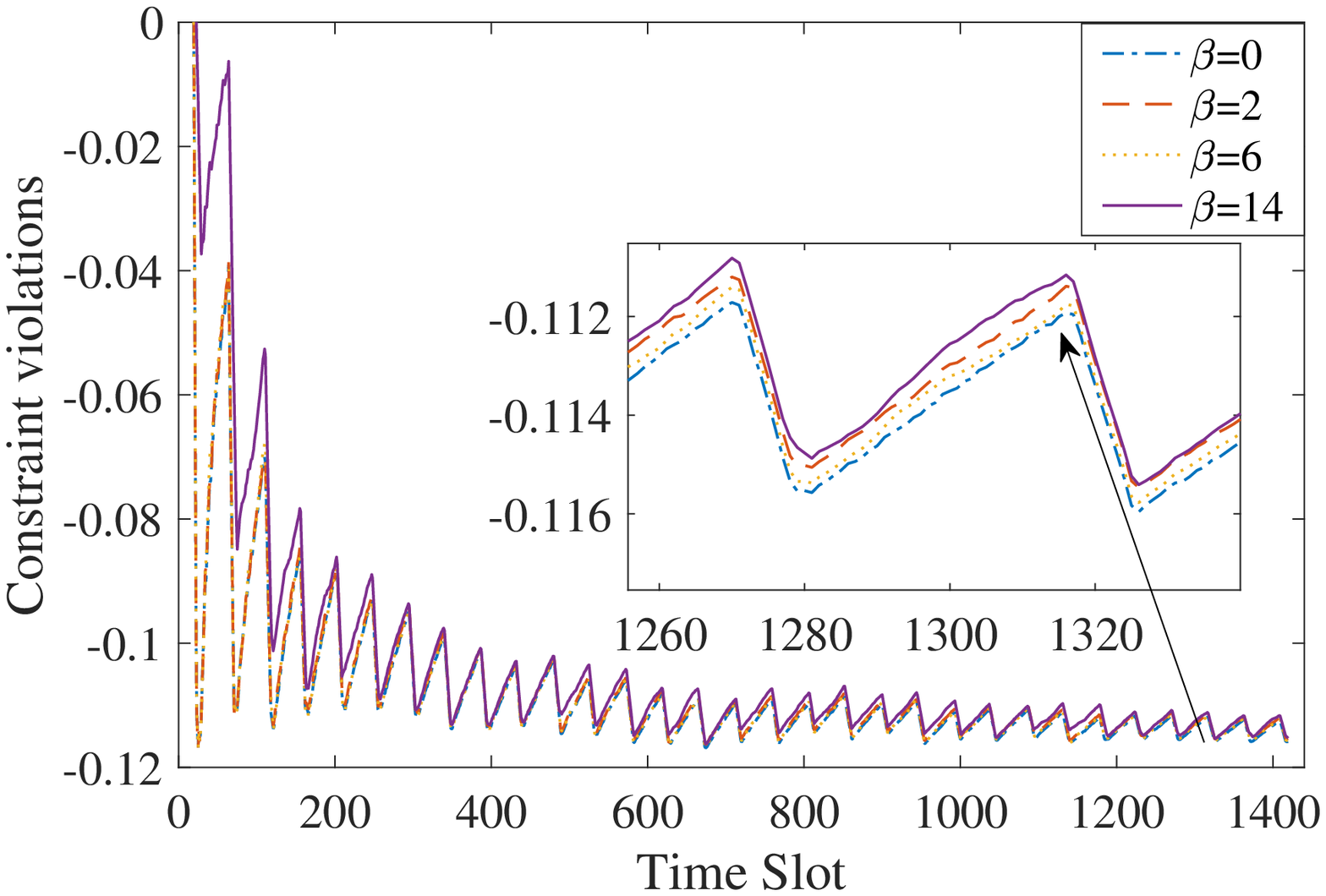}}
\caption{Constraint violations.}
 \label{Fig:Violation}
\end{minipage}

\end{figure*}

We simulate a scenario where an LEO satellite orbits the Earth and captures Earth images according to the mission demand.   We select a 6am Sun-synchronous orbit and the orbit  altitude  is set  to  550km. We simulate two ground stations,  one in the light and the other in the eclipse.  
The satellite harvests energy as it orbits the Earth.  We calculate the  energy harvesting rate $r_e(t)$ (in J/min)  by $r_e(t)=E_s^{\max}\cos(\theta_t)$, where $E_d$ is the maximal harvest  power and $\theta_t$ is the angle between the Sun and the normal vector  of  the solar panel mentioned in Section \ref{sec:Energy Harvesting and Storage Model}. We set $E_d=30$ J/min and $\theta_t$ varies between $[0, \pi/2]$. We set  the main simulation parameters according to \cite{mysatellite} and \cite{9155418}. The energy consumption of other subsystems $E_a=0$ for simplicity. Battery capacity $E_{\max}=10800$ J, and the battery is full of charge at the beginning of the mission. The data size of each frame is set  to 60 Mbits. The sensing power is set  to 2 J/min. Sensing frame rate $f_\mathrm{sen}(t)$ is randomly generated in  [0, 4].  The data to computation ratio $\kappa$ is set  to 0.1 bit/cycle.
The effective data proportion $\rho=0.25$.
Maximal CPU frequency $f_p^{\max}=4$  GHz. Bandwidth $B=80$ MHz.
Parameter $h/N_0$ is randomly generated  in $[15,20]$.
Maximal transmission bitrate $R_{\max}$ is set  to $500$ Mbit/s.
Algorithm Parameters are set  to  ${\beta}=14$  and ${\gamma}=1440^{1/4}$.
In the simulation, we schedule the energy every one minute over 24 hours. Hence, the whole time horizon  $T=1440$. We divide the whole time horizon into $K=4$ time windows.

\textbf{ Baseline Approaches.} We compare our algorithm with three baselines as follows.

 1) \textit{Optimal dynamic benchmark}. The scheduler  knows channel state information $\{h(1), \cdots, h(T)\} $ over the time horizon $T$ and  finds  optimal energy scheduling strategies $[\mathbf{x}^*(1), \cdots, \mathbf{x}^*(T)]$.  This benchmark provides a performance upper bound for the evaluated algorithms. We achieve the optimal dynamic benchmark directly by the convex optimization tool, CVX \cite{mycvx}. 
 
 2) \textit{Optimal pattern-aware benchmark}. The scheduler  knows channel state information $\{h(1), \cdots, h(T)\} $ over the time horizon $T$ and finds $K$  energy scheduling strategies $[\mathbf{x}^*(1), \cdots, \mathbf{x}^*(K)]$ for each time window  $\mathcal{W}_k$ as defined in (\ref{Equation:17}),
which is optimal only  regarding energy arrival rate in each time window $\mathcal{W}_k$. 

 3) \textit{Greedy algorithm}. The scheduler  knows channel state information of the current time slot $h(t)$ and allocate minimal energy to sensing, computing, and communication to satisfy the constraints in (\ref{Equation:11}) and  (\ref{Equation:14}) in each time slot $t$.

\subsection{Simulation Results}
 In Figure~\ref{Fig: Density}, the depth of discharge of our algorithm is less than both  the optimal pattern-aware  benchmark and the greedy algorithm  over the time horizon  when the energy harvesting capability varies in the range of [20, 130] J.   
 The optimal dynamic benchmark achieves the minimal depth of  discharge. With the full information,  it allocates the least energy only for sensing and no energy for computation and communication in the eclipse. And it allocates as much energy (directly from the solar panels) as possible to finish the accumulative workloads in the light. As expected, the optimal pattern-aware benchmark performs worse than the optimal dynamic benchmark because  it only optimizes  for each time window without global scheduling and gets $4$ fixed solutions for each orbit period, which cannot track the system dynamics.  Our algorithm can adapt to the system dynamics by learning from historical information. 
In our algorithm, the depth of discharge decreases slowly with the harvested energy density. It implies that our algorithm is robust to the variation of maximal harvesting  energy and  can work well with limited  harvesting capability.  The optimal dynamic benchmark is also robust to the variation of maximal harvesting  energy. 

 In Figure~\ref{Fig: Rho},  we aim to evaluate how the effective data proportion impacts our algorithm. We set the parameter $\rho$  in [0.1, 1] and record the sum of the amount of discharge  energy over the whole time horizon $T$. We can observe that our algorithm performs better than both the optimal pattern-aware  benchmark and the greedy algorithm when $\rho \leq 0.5$ and its amount of  discharge is larger than the optimal pattern-aware  benchmark when $\rho > 0.5$. The effective data proportion influences the data amount to be transmitted to the ground. The transmission rate will increase with  $\rho$, and  the energy increases exponentially with transmission rate, which results in deeper battery discharge  in the downlink session of the eclipse.    The greedy algorithm  performance deteriorates most sharply with $\rho$. The two optimal  benchmarks are robust to the variation of $\rho$ because they can optimize in larger timescale and keep the workload  to be processed or transmitted in the light to reduce the depth of discharge in the eclipse.

 In Figure \ref{Fig:Regret} and Figure \ref{Fig:Violation}, our algorithm has no regret against the optimal pattern-aware  benchmark and  the constraint violation asymptotically  approaches zero as proved in Theorem 1.  We plot the accumulative regret and constraint violations with time under different values of parameter $\beta$.  Despite the energy harvesting dynamics and the lack of channel gain information, our algorithm can achieve a near-optimal solution.  When $\beta=0$ (not a validate value), the accumulative regret grows linearly with the time horizon because it leads to a large step size $\frac{1}{2\alpha}$  in (\ref{Equation:22}) and the learning performance is poor. The regret gets smaller with the increase of $\beta$.  Increasing $\beta$ will result in smaller step size $\frac{1}{2\alpha}$, the algorithm will fail to adapt  to the system dynamics, thus incurring many constraint violations  as in Figure~\ref{Fig:Violation}.
 We find  that our algorithm can achieve the minimal depth of  discharge  without constraint violations when $\beta=14$.


\section{Conclusion}
In this paper, we extend the  batteries'  life of LEO satellites by minimizing the depth of discharge for Earth observation missions.  We propose  a pattern-aware online  energy  scheduling algorithm that decides the depth of discharge, CPU frequency, and transmission rate under lack of wireless channel state information.  Our algorithm has a theoretical guarantee of  learning loss  and  reduces  the depth of discharge significantly.   Even our solution cannot  fundamentally extend  batteries' life like inventing new battery materials, it can work well together with those advanced battery technologies. In  future work, we are interested to optimize the depth of discharge for large-scale constellations with satellites collaboration.
\section{Acknowledgement}
This work was supported by the National Natural Science Foundation of China (61921003, 61922017, and 62032003).

\bibliographystyle{IEEEtran}
\bibliography{IEEEabrv,reference210903}

\end{document}